\title{Convex Hedging in Incomplete Markets}
\author{Birgit Rudloff \thanks{Princeton University, Department of
    Operations Research and Financial Engineering, Princeton, NJ 08544, USA,
    email: brudloff@princeton.edu
    \newline
    This work was partially supported by Vienna Science and Technology Fund (WWTF) grant P15889-G05.}}
\newtheorem{theorem}{Theorem}[section]
\newtheorem{remark1}[theorem]{Remark}
\newtheorem{lemma}[theorem]{Lemma}
\newtheorem{definition}[theorem]{Definition}
\newtheorem{assumption}[theorem]{Assumption}
\newenvironment{remark}{\begin{remark1}\rm}{\end{remark1}}
\newcommand{\of}[1]{\ensuremath{\left( #1 \right)}}
\newcommand{\R}{\mathrm{I\negthinspace R}}
\DeclareMathOperator*{\core}{core}
\DeclareMathOperator*{\dom}{dom}
\begin{document}
\date{April 3, 2007}
\maketitle

\begin{abstract}
In incomplete financial markets not every contingent claim can be
replicated by a self-financing strategy. The risk of the resulting
shortfall can be measured by convex risk measures, recently
introduced by \citet{FoeSch02}. The dynamic optimization problem of
finding a self-financing strategy that minimizes the convex risk of
the shortfall can be split into a static optimization problem and a
representation problem. It follows that the optimal strategy
consists in superhedging the modified claim $\widetilde{\varphi}H$,
where $H$ is the payoff of the claim and $\widetilde{\varphi}$ is
the solution of the static optimization problem, the optimal
randomized test.
\\
In this paper, we will deduce necessary and
sufficient optimality conditions for the static problem using
convex duality methods. The solution of the static optimization
problem turns out to be a randomized test with a typical
$0$-$1$-structure.
\\[.2cm]
{\bf Keywords and phrases:} hedging, shortfall risk, convex risk
measures, convex duality, generalized Neyman-Pearson lemma
\\[.2cm]
{\bf JEL Classification:} G10, G13, D81
\\[.2cm]
{\bf Mathematics Subject Classification (2000):} 60H30, 62F03,
91B28
\end{abstract}

\section{Introduction}
In an incomplete financial market not every contingent claim is
attainable and the equivalent martingale measure is no longer
unique. Thus, a perfect hedge as in the Black-Scholes-Merton model
is not possible any longer. Therefore, we are faced with the
problem of searching strategies which reduce the risk of the
resulting shortfall as much as possible.
\\
One can still stay on the safe side using a 'superhedging'
strategy. But from a practical point of view, the cost of
superhedging is often too high.
\\
For this reason, we consider the possibility of investing less
capital than the superhedging price of the liability. This leads to
a shortfall, the risk of which, measured by a suitable risk measure,
should be minimized. This problem has been studied using the
probability \citep{FoeLeu99}, the expectation of a loss function
\citep{FoeLeu00} and a coherent risk measure
\citep{Nakano03,Nakano04,Rudloff05CohH} to quantify the shortfall
risk. In our approach, we use convex risk measures, a generalization
of coherent risk measures. In analogy to the problems mentioned
above, the dynamic optimization problem of finding an admissible
strategy that minimizes the convex shortfall risk can be split into
a static optimization problem and a representation problem. The
optimal strategy consists in superhedging a modified claim
$\widetilde{\varphi}H$, where $H$ is the payoff of the claim and
$\widetilde{\varphi}$ is a solution of the static optimization
problem, an optimal randomized test. In this paper, we show the
existence of a solution to the static problem. We deduce with
methods of Fenchel duality a necessary and sufficient condition for
an optimal randomized test that gives a result about the structure
of the solution. The results are new and even improve, when
restricted to coherent risk measures, the results obtained by
\citet{Nakano03,Nakano04} \citep[see][Section~4.1.3]{Rudloff06Diss}.
\\
This paper is organized as follows: In Section~\ref{sec2}, we state
the formulation of the shortfall problem and review the definition
of convex risk measures. Then, we prove the possibility to decompose
the dynamic optimization problem into a static and a representation
problem. In Section~\ref{sec4}, we analyze the static optimization
problem. We show the existence of a solution, formulate the dual
problem and show that strong duality holds. Thus, the optimal
solution is a saddle point of a functional, specified in
Section~\ref{secSP}. To solve the problem, we first consider the
inner problem in Section~\ref{sec4.3} and deduce an extended
Neyman-Pearson lemma. Then, we solve the saddle point problem in
Section~\ref{sec4.4}. The optimal solution of the static
optimization problem is a randomized test with the typical
$0$-$1$-structure.

\section{Formulation of the Problem}\label{sec2}
The discounted price process of the $d$ underlying assets is
described as an $\R^d$-valued semimartingale
$S=(S_{t})_{t\in[0,T]}$ on a complete probability space
$(\Omega,\mathcal{F},P)$ with filtration
$(\mathcal{F}_{t})_{t\in[0,T]}$. The filtration is supposed to
satisfy the usual conditions (see e.g. \cite{KarShr98}). We write
$L^{1}$ and $L^{\infty}$ for $L^{1}(\Omega,\mathcal{F},P)$ and
$L^{\infty}(\Omega,\mathcal{F},P)$, respectively. We endow $L^{1}$
and $L^{\infty}$  with the norm topology. Then, the topological
dual space of $L^1$  can be identified with $L^{\infty}$ and the
topological dual space of $L^{\infty}$ can be identified with
$ba(\Omega,\mathcal{F},P)$, the space of finitely additive set
functions on $(\Omega,\mathcal{F})$ with bounded variation,
absolutely continuous to $P$ \citep[see][Chapter~IV, 9,
Example~5]{Yosida80}. Let $\langle X,Y\rangle$ be the bilinear
form between $L^{\infty}$ and $ba(\Omega,\mathcal{F},P)$: $\langle
X,Y\rangle=\int_\Omega YdX$ for all $Y\in L^{\infty}$, $X\in
ba(\Omega,\mathcal{F},P)$. For $X\in L^{1}$ it reduces to $\langle
X,Y\rangle=E[XY]$, where $E$ denotes the mathematical expectation
with respect to $P$.
\\
Let $\widehat{\mathcal{Q}}$ be the set of all probability measures
on $(\Omega,\mathcal{F})$ absolutely continuous with respect to
$P$. For $Q\in\widehat{\mathcal{Q}}$ we denote the expectation
with respect to $Q$ by $E^{Q}$ and the Radon-Nikodym derivative
$dQ/dP$ by $Z_{Q}$.
\\
An $\R^d$-valued semimartingale $S=(S_{t})_{t\in[0,T]}$ is called a
sigma-martingale if there exists an $\R^d$-valued martingale $M$ and
an $M$-integrable predictable $\R_+$-valued process $\xi$ such that
$S_t=\int_0^t \xi_s dM_s$, $t\in[0,T]$ (see \cite{DelSchach06},
Section~14.2). Let $\mathcal{P}_\sigma$ denote the set of
probability measures $P^*$ equivalent to $P$ such that $S$ is a
sigma-martingale with respect to $P^*$. Since we want to exclude
arbitrage opportunities, we assume that
$\mathcal{P}_\sigma\neq\emptyset$. To be more concrete: $S$
satisfies the condition of 'no free lunch with vanishing risk' if
and only if $\mathcal{P}_\sigma\neq\emptyset$ (Theorem~14.1.1,
\cite{DelSchach06}). The concept of 'no free lunch with vanishing
risk' is a mild strengthening of the concept of 'no arbitrage' that
has to be used in general semimartingale models. In the case of a
finite probability space $\Omega$ the above assertion holds true if
one replaces the term 'no free lunch with vanishing risk' by the
term 'no arbitrage' and the set $\mathcal{P}_\sigma$ by the set of
equivalent martingale measures $\mathcal{P}$. In the case of an
$\R^d$-valued (locally) bounded semimartingale $S$ one may replace
the set $\mathcal{P}_\sigma$ by the set of equivalent (local)
martingale measures $\mathcal{P}_{(loc)}$ (see \cite{DelSchach06},
Chapter~9).
\\
Equations and inequalities between random variables are always
understood as $P-a.s.$
\\
A self-financing strategy is given by an initial capital $V_{0}\geq 0$ and
a predictable process $\xi$ such that the resulting value
process
\[
    V_{t}=V_{0}+\int_{0}^{t}\xi_{s}dS_{s},\quad  t\in[0,T],
\]
is well defined. Such a strategy $(V_{0},\xi)$ is called admissible
if the corresponding value process $V_t$ satisfies $V_{t}\geq 0$ for
all $t\in[0,T]$.
\\
Consider a contingent claim. Its payoff is given by an
$\mathcal{F}_{T}$ -\mbox{measurable}, nonnegative random variable
$H\in L^{1}$. We assume
\begin{equation}
    \label{SHP}
    U_{0}=\sup_{P^{*}\in\mathcal{P}_\sigma}E^{P^{*}}[H]<+\infty.
\end{equation}
The above equation is the dual characterization of the superhedging
price $U_{0}$, the smallest amount $V_{0}$ such that there exists an
admissible strategy $(V_{0},\xi)$ with value process $V_{t}$
satisfying $V_{T}\geq H$ (see \cite{DelSchach06}, Theorem~14.5.20).
The corresponding strategy is called the superhedging strategy of
the claim $H$. Again, in the case where $S$ is an $\R^d$-valued
(locally) bounded semimartingale, one may replace the set
$\mathcal{P}_\sigma$ by the set of equivalent (local) martingale
measures $\mathcal{P}_{(loc)}$. In the complete case, where the
equivalent sigma-martingale measure $P^{*}$ is unique,
$U_{0}=E^{P^{*}}[H]$ is the unique arbitrage-free price of the
contingent claim.
\\
Since superhedging can be quite expensive in the incomplete market
(see e.g. \cite{GusMor02}), we search for the best hedge an
investor can achieve with a smaller amount
$\widetilde{V}_{0}<U_{0}$. In other words, we look for an
admissible strategy $(V_{0},\xi)$ with
$0<V_{0}\leq\widetilde{V}_{0}$ that minimizes the risk of losses
due to the shortfall $\{\omega:V_{T}(\omega)<H(\omega)\}$, this
means we want to minimize the risk of $-(H-V_T)^+$. The risk will
be measured by a convex risk measure $\rho$, recently introduced
by \citet{FoeSch02}. Thus, we consider the dynamic optimization
problem of finding an admissible strategy that solves
\begin{equation}
    \label{dynop1}
    \min_{\scriptstyle{(V_{0},\xi)}}\rho\of{-\of{H-V_{T}}^{+}}
\end{equation}
under the capital constraint of investing less capital than the
superhedging price
\begin{equation}
    \label{dynop2}
    0<V_{0}\leq\widetilde{V}_{0}<U_{0}.
\end{equation}
For the convenience of the reader we recall the definition and some
properties of convex risk measures. In contrast to \citet{FoeSch02},
where $\rho:\;L^{\infty}\to\R$, we consider convex risk measures
defined on $L^1$ that can also attain the value $+\infty$ for
investments that are not acceptable in any way.

\begin{definition}[convex risk measure]\label{def krm}
A function $\rho:\;L^{1}\to\R\cup\{+\infty\}$ with $\rho(0)=0$ is
a convex risk measure if it satisfies for all $X_1,X_2\in L^{1}$:
\begin{itemize}

\item[(i)]monotonicity: $\quad X_1 \geq X_2\Rightarrow \rho
\of{X_1} \leq \rho \of{X_2}$,

\item[(ii)]translation property: $\quad c\in\R\Rightarrow\rho
\of{X_1 + c\textbf{1}} = \rho \of{X_1} - c$,

\item[(iii)]convexity: $\quad \lambda \in [0,1]\Rightarrow\rho
\of{\lambda X_1+(1-\lambda)X_2} \leq \lambda\rho \of{X_1} +
(1-\lambda)\rho \of{X_2}$.
\end{itemize}
\end{definition}
\noindent The random variable equal to $1$ almost surely is denoted
by $\textbf{1}$ in (ii). The assumption $\rho(0)=0$ is reasonable
and ensures that $\rho(X)$ can be interpreted as risk adjusted
capital requirement. The set $\mathcal A:=\{X\in L^1:\;\rho(X)\leq
0\}$ is called the acceptance set of the risk measure $\rho$. It is
well known that each lower semicontinuous convex risk measures
admits a dual representation (see \cite{FoeSch02} for risk measures
on $L^\infty$ and for those on general $L^p$ spaces, see for
instance \cite{RusSha06}, \cite{Ham05}, \cite{Rudloff06Diss}). A
function $\rho:\;L^{1}\to\R\cup\{+\infty\}$ is a lower
    semicontinuous, convex risk measure if and only if there exists a
    representation of the form
    \begin{equation}
        \label{dkrm1}\rho\of{X}=\sup_{Q\in \mathcal{Q}}\{E^Q[-X]-\sup_{\widetilde{X}\in\mathcal A}E^Q[-\widetilde{X}]\},
    \end{equation}
    where $\mathcal Q:=\{Q\in\widehat{\mathcal Q}:Z_Q\in L^\infty\}$.
    The conjugate function $\rho^*$ of $\rho$ is
    nonnegative, convex, proper, weakly* lower semicontinuous,
    \begin{equation*}
        \dom \rho^*\subseteq \{-Z_Q:Q\in\mathcal Q\}
    \end{equation*}
    and $\rho^*$ satisfies for all $Y\in L^\infty$ with $E[Y]=-1$
    \begin{equation}\label{r*}
        \rho^*(Y)=\sup_{X\in\mathcal A}E[XY].
    \end{equation}

\begin{remark}\label{remark alpha}
Let $\alpha(Q):\mathcal Q\rightarrow\R\cup\{+\infty\}$ be a
functional with $\inf_{Q\in\mathcal Q}\alpha(Q)=0$. Then,
 \begin{equation*}
    \rho\of{X}:=\sup_{Q\in\mathcal{Q}}\{E^{Q}[-X]-\alpha(Q)\}
 \end{equation*}
is a convex risk measure. The functional $\alpha$ is called a
penalty function and $\rho^*(-Z_Q)=:\alpha_{min}(Q)$ for
$Q\in\mathcal Q$ is the minimal penalty function on $\mathcal Q$
that represents $\rho$ (see \cite{FoeSch04}).
\\
The penalty function $\alpha$ describes how seriously the
probabilistic model $Q\in\mathcal Q$ is taken. The value of the
convex risk measure $\rho(X)$ is the worst case of the expected loss
$E^{Q}[-X]$ reduced by $\alpha(Q)$, taken over all models
$Q\in\mathcal Q$ (\cite{FoeSch04}, Section~3.4).
\end{remark}
\noindent Coherent risk measures, as introduced in
\citet{ArtDelEbeHea99} and \citet{Delbaen02}, are convex risk
measures that are additionally positive homogeneous. For coherent
risk measures the dual representation (\ref{dkrm1}) reduces to:
$\rho$ is a lower semicontinuous coherent risk measure if and only
if there exists a non-empty subset of probability measures
$\widetilde{\mathcal{Q}}$ of $\mathcal{Q}$ with
$\{Z_Q:Q\in\widetilde{\mathcal{Q}}\}$ convex and weakly* closed in
$L^\infty$, such that
 \begin{equation}
    \label{dkohrm}\rho\of{X}=\sup_{Q\in \widetilde{\mathcal{Q}}}E^Q[-X].
 \end{equation}
This conditions is satisfied if $\rho$ admits (\ref{dkrm1}) with
$\mathcal A$ being a cone, which implies that $\sup_{X\in\mathcal
A}E^Q[-X]=\mathcal I_{\widetilde{\mathcal{Q}}}(Q)$ is an indicator
function equal to zero for $Q\in\widetilde{\mathcal{Q}}$ and
$+\infty$ otherwise.

\section{Decomposition of the Dynamic Problem}\label{sec3}
The dynamic optimization problem (\ref{dynop1}), (\ref{dynop2}) can
be split into the following two problems:
\begin{itemize}
\item[1.] Static optimization problem: Find an optimal modified
claim $\widetilde{\varphi}H$, where $\widetilde{\varphi}$ is a
randomized test solving
\begin{equation}
    \label{statop}
    \min_{{\varphi\in R_{0}}}\rho\of{(\varphi-1)H},
\end{equation}
\begin{equation}
    \label{statop2}
    R_{0}=\{\varphi: \; \Omega \to[0,1],\; \mathcal{F}_{T}-\mbox{measurable},\;
    \sup_{P^{*}\in \mathcal{P}_\sigma}E^{P^{*}}[\varphi H]\leq\widetilde{V}_{0}\}
\end{equation}
\item[2.] Representation problem: Find a superhedging strategy for
the modified claim $\widetilde{\varphi}H$.
\end{itemize}
This idea was introduced by \citet{FoeLeu99,FoeLeu00} using the
expectation of a loss function as risk measure and was used for
coherent risk measures in
\citet{Nakano03,Nakano04,Rudloff05CohH,Rudloff06Diss} analogously.
We obtain the following theorem for convex risk measures:

\begin{theorem}\label{Theorem Nakano}
Let $\widetilde{\varphi}$ be a solution of the minimization
problem (\ref{statop}) and let
$(\widetilde{V}_{0},\widetilde{\xi})$ be the admissible strategy,
where $\widetilde{\xi}$ is the superhedging strategy of the claim
$\widetilde{\varphi}H$. Then the strategy
$(\widetilde{V}_{0},\widetilde{\xi})$ solves the optimization
problem (\ref{dynop1}), (\ref{dynop2}) and it holds
\begin{equation}
    \label{dyn=stat}
    \min_{\scriptstyle{(V_{0},\xi)}}\rho(-(H-V_T)^+)
    =\min_{{\varphi\in R_{0}}}\rho\of{(\varphi-1)H}.
\end{equation}
\end{theorem}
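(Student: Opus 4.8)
The plan is to prove two inequalities between the dynamic and static optimal values, and then verify that the proposed strategy is admissible, feasible, and attains the static minimum.

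First I would establish the inequality
\[
\min_{(V_0,\xi)}\rho(-(H-V_T)^+)\geq\min_{\varphi\in R_0}\rho((\varphi-1)H).
\]
To do this I would take an arbitrary admissible strategy $(V_0,\xi)$ satisfying the capital constraint \eqref{dynop2} and extract from it a randomized test. The natural candidate is
\[
\varphi:=\frac{V_T\wedge H}{H}\quad\text{on }\{H>0\},\qquad\varphi:=1\text{ on }\{H=0\}.
\]
Since $0\leq V_T\wedge H\leq H$, this $\varphi$ takes values in $[0,1]$ and is $\mathcal F_T$-measurable. The key algebraic identity is
\[
(\varphi-1)H=V_T\wedge H-H=-(H-V_T)^+,
\]
so $\rho((\varphi-1)H)=\rho(-(H-V_T)^+)$ for this particular $\varphi$. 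It then remains to check feasibility, i.e.\ $\varphi\in R_0$. For every $P^*\in\mathcal P_\sigma$ one has $E^{P^*}[\varphi H]=E^{P^*}[V_T\wedge H]\leq E^{P^*}[V_T]\leq V_0\leq\widetilde V_0$, where the middle inequality uses that $V_t$ is a (sigma-)martingale and $V_T\geq 0$ under admissibility, so its $P^*$-expectation is dominated by the initial value $V_0$. Taking the supremum over $P^*$ gives $\sup_{P^*}E^{P^*}[\varphi H]\leq\widetilde V_0$, hence $\varphi\in R_0$ and the displayed inequality follows by minimizing over strategies on the left and noting that the feasible $\varphi$ is a competitor on the right.

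For the reverse inequality I would use the proposed strategy directly. Let $\widetilde\varphi$ solve \eqref{statop} and let $(\widetilde V_0,\widetilde\xi)$ be the superhedging strategy of the modified claim $\widetilde\varphi H$. By the dual superhedging characterization \eqref{SHP} applied to $\widetilde\varphi H$, the superhedging price is $\sup_{P^*\in\mathcal P_\sigma}E^{P^*}[\widetilde\varphi H]\leq\widetilde V_0$ since $\widetilde\varphi\in R_0$, so this strategy can indeed be started with capital $\widetilde V_0$ and, being a superhedge, it is admissible and satisfies the capital constraint \eqref{dynop2}. Its terminal value obeys $V_T\geq\widetilde\varphi H\geq 0$, whence
\[
-(H-V_T)^+\geq-(H-\widetilde\varphi H)^+=(\widetilde\varphi-1)H,
\]
using $0\leq\widetilde\varphi\leq1$. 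By monotonicity of $\rho$ (property (i) of Definition~\ref{def krm}) this yields $\rho(-(H-V_T)^+)\leq\rho((\widetilde\varphi-1)H)=\min_{\varphi\in R_0}\rho((\varphi-1)H)$, so the dynamic minimum is at most the static minimum and is attained by $(\widetilde V_0,\widetilde\xi)$.

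Combining the two inequalities gives the equality \eqref{dyn=stat} and shows that $(\widetilde V_0,\widetilde\xi)$ is optimal. I expect the main obstacle to lie in the admissibility and expectation bound in the first inequality: one must justify carefully that for an admissible value process with $V_T\geq0$ the supermartingale property under each sigma-martingale measure $P^*$ gives $E^{P^*}[V_T]\leq V_0$, invoking the integrability afforded by \eqref{SHP} and the fact that $\varphi H\leq H\in L^1$ so the relevant expectations are finite. The algebraic identities and the monotonicity argument are routine once feasibility is secured.
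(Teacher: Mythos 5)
Your proposal is correct and follows essentially the same route as the paper: the success ratio $\varphi=(V_T\wedge H)/H$ (identical to the paper's $I_{\{V_T\geq H\}}+\tfrac{V_T}{H}I_{\{V_T<H\}}$) together with the $\mathcal P_\sigma$-supermartingale bound $E^{P^*}[V_T]\leq V_0$ gives one inequality, and the superhedge of $\widetilde\varphi H$ started at $\widetilde V_0$ plus monotonicity of $\rho$ gives the other. The only cosmetic difference is that the paper phrases the second step via the success ratio of the strategy $(\overline V_0,\widetilde\xi)$ with $\overline V_0\in[\widetilde U_0,\widetilde V_0]$, whereas you work directly with $-(H-V_T)^+$; the content is the same.
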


\begin{proof}
Let $(V_0,\xi)$ with $V_0\leq\widetilde{V}_0$ be an admissible
strategy. We define the corresponding success ratio $
\varphi=\varphi_{(V_0,\xi)}$ as
\[
    \varphi_{(V_0,\xi)}:=I_{\{V_T\geq H\}}+\frac{V_T}{H}I_{\{V_T<H\}},
\]
where $I_{A}(\omega)$ is the stochastic indicator function equal to
one for $\omega\in A$ and zero otherwise. Thus,
$-(H-V_T)^+=(\varphi-1)H$. Since $V_t$ is a $\mathcal
P_\sigma$-supermartingale (\cite{DelSchach06}, Theorem~14.5.5) and
$\varphi H\leq V_T$:
\[
    \forall P^*\in\mathcal P_\sigma:\quad E^{P^*}[\varphi H]\leq E^{P^*}[V_T]\leq V_0\leq
    \widetilde{V}_0,
\]
hence, $\varphi\in R_0$. Thus,
\begin{equation}
    \label{sucess ratio}
    \rho(-(H-V_T)^+)=\rho((\varphi-1)H)\geq
    \rho((\widetilde{\varphi}-1)H),
\end{equation}
where $\widetilde{\varphi}$ is the solution to the static
optimization problem (\ref{statop}).  Consider the admissible
strategy $(\overline{V}_0, \widetilde{\xi})$, where
$\widetilde{\xi}$ is the superhedging strategy for the modified
claim $\widetilde{\varphi}H$ and
$\overline{V}_{0}\in[\widetilde{U}_{0},\widetilde{V}_{0}]$, where
$\widetilde{U}_{0}=\sup_{P^{*}\in
\mathcal{P}_\sigma}E^{P^{*}}[\widetilde{\varphi}H]$ is the
superhedging price of the modified claim $\widetilde{\varphi}H$ (see
Theorem~14.5.20, \cite{DelSchach06}). Inequality (\ref{sucess
ratio}) is especially satisfied for the success ratio of the
admissible strategy $(\overline{V}_0, \widetilde{\xi})$. Thus,
\begin{equation}
    \label{*}
    \rho((\varphi_{(\overline{V}_0, \widetilde{\xi})}-1)H)\geq
    \rho((\widetilde{\varphi}-1)H).
\end{equation}
To show the revers inequality, let us consider
$\varphi_{(\overline{V}_0,\widetilde{\xi})}H=\min(\widetilde{V}_T,H)$,
where
$\widetilde{V}_T=\overline{V}_0+\int_0^T\widetilde{\xi}_sdS_s$.
Because of $\widetilde{U}_{0}+\int_0^T
    \widetilde{\xi}_sdS_s\geq \widetilde{\varphi}H$ (superhedging) and
    $\overline{V}_{0}\in[\widetilde{U}_{0},\widetilde{V}_{0}]$, it holds
\[
    \widetilde{V}_T=\overline{V}_0+\int_0^T \widetilde{\xi}_sdS_s
    \geq\widetilde{\varphi}H+\overline{V}_0-\widetilde{U}_0
    \geq\widetilde{\varphi}H.
\]
Thus, $\varphi_{(\overline{V}_0, \widetilde{\xi})}H\geq
\widetilde{\varphi}H$. Since the convex risk measure $\rho$ is
monotone, we obtain
\begin{equation*}
    \rho((\varphi_{(\overline{V}_0, \widetilde{\xi})}-1)H)\leq
    \rho((\widetilde{\varphi}-1)H).
\end{equation*}
Together with (\ref{*}), we see that $\varphi_{(\overline{V}_0,
\widetilde{\xi})}$ attains the minimum of the static optimization
problem (\ref{statop}). Due to (\ref{sucess ratio}), we now have
\[
    \min_{\scriptstyle{(V_{0},\xi)}}\rho(-(H-V_T)^+)\geq
    \rho(-(H-\widetilde{V}_T)^+).
\]
Hence, $(\overline{V}_0, \widetilde{\xi})$ with
$\overline{V}_{0}\in[\widetilde{U}_{0},\widetilde{V}_{0}]$ is the
strategy that attains the minimum in the dynamic optimization
problem (\ref{dynop1}), (\ref{dynop2}) and equation (\ref{dyn=stat})
holds true.
\end{proof}

\begin{remark}\label{Remark U0}
In the case of risk measures that allow the construction of
$\widetilde{\varphi}$ via the Neyman-Pearson lemma directly (cf.
\cite{FoeLeu99} and some special cases of \cite{FoeLeu00}), one
can see that $\widetilde{U}_{0}=\widetilde{V}_{0}$ since the
optimal test $\widetilde{\varphi}$ attains the bound
$\widetilde{V}_{0}$ in (\ref{statop2}).
\\
In Theorem \ref{theorem SP}, equation (\ref{6.2}) of this paper we
will show that in the case of convex hedging the bound
$\widetilde{V}_{0}$ is as well attained by the optimal test. Thus,
the optimal strategy is $(\widetilde{V}_{0},\widetilde{\xi})$.
\end{remark}

\noindent In the following section, we will consider the static
optimization problem (\ref{statop}). To solve it we improve the
method used in \cite{Rudloff06Diss}.

\section{The Static Optimization Problem}\label{sec4}

Now, we will show that there exists a solution
$\widetilde{\varphi}$ of the static optimization problem
(\ref{statop}) and derive necessary and sufficient optimality
conditions. Therefore, we will construct the dual problem of
(\ref{statop}), deduce a result about the structure of a solution
for the inner problem of the dual problem and then solve the whole
problem.

\subsection{The Primal Problem and the Plan of its Solution}\label{sec4.1}

We impose the following assumption that has to be satisfied
throughout the remaining part of this paper:

\begin{assumption}\label{assumption rho}
Let $\rho:\;L^{1}\to\R\cup\{+\infty\}$ be a lower semicontinuous
convex risk measure that is continuous and finite in some
$(\varphi_0 -1)H$ with $\varphi_0\in R_0$.
\end{assumption}

\begin{remark}
A convex risk measure $\rho:\;L^{1}\to\R\cup\{+\infty\}$ is (without
assuming lower semicontinuity) continuous in the interior of its
domain (extended Namioka Theorem, see \cite{RusSha06},
Proposition~3.1 or \cite{FriBia06}, Theorem~2). Especially, if
$\rho(X)<+\infty$ for all $X\in L^1$, a convex risk measure admits
the representation (\ref{dkrm1}) and is continuous. But for extended
real valued convex risk measures we still need the assumption of
lower semicontinuity to obtain representation (\ref{dkrm1}).
\end{remark}

\noindent
Let us consider the measurable space $(\mathcal P_\sigma,
\mathcal S)$, where $\mathcal S$ is the $\sigma$-algebra generated
by all subsets of $\mathcal P_\sigma$. We denote by $\Lambda_+$ the
set of finite measures on $(\mathcal P_\sigma, \mathcal S)$.

\noindent We give an overview over the procedure to solve the static
optimization problem:
\begin{itemize}
    \item[(i)] Prove the existence of a solution
            $\widetilde{\varphi}$ to the primal problem (\ref{statop}) (Theorem~\ref{Th. Existenz})
            \begin{equation*}
                p=\min_{{\varphi\in R_{0}}}\rho\of{(\varphi-1)H}
                =\min_{\varphi\in R_{0}}\{\sup_{Q\in \mathcal{Q}}\{E^{Q}[(1-\varphi)H]-\sup_{X\in\mathcal A}E^Q[-X]\}\}.
            \end{equation*}
    \item[(ii)] Deduce the dual problem to (\ref{statop}) by Fenchel duality:
            \begin{equation}
            \label{review}
                d=\sup_{Q\in \mathcal{Q}}\{\inf_{\varphi\in R_{0}}\{E^{Q}[(1-\varphi)H]-\sup_{X\in\mathcal A}E^Q[-X]\}\}
            \end{equation}
            and prove the validity of strong duality $p=d$ (Theorem~\ref{theorem dualproblem}).
            We obtain the existence of a dual solution and can
            show that the problem is a saddle point problem.
    \item[(iii)] Consider the inner problem of the dual problem (\ref{review})
            for an arbitrary $Q\in \mathcal{Q}$:
            \begin{equation}
            \label{ip view}
                p^i(Q):=\max_{\varphi\in R_0}E^Q[\varphi H].
            \end{equation}
            Prove the existence of a solution $\widetilde{\varphi}_{Q}$ to
            (\ref{ip view}) (Lemma~\ref{lemma existence ip}).
                Deduce the dual problem of (\ref{ip view}) by Fenchel duality:
                \begin{equation*}
                    d^i(Q)=\inf_{\lambda\in \Lambda_{+}}\Big\{\int\limits_{\Omega}
                    [HZ_Q-H\int\limits_{\mathcal P_\sigma}Z_{P^{*}}d\lambda]^{+}dP
                    +\widetilde{V}_{0}\lambda(\mathcal P_\sigma)\Big\}.
                \end{equation*}
                Prove the validity of strong duality $p^i(Q)=d^i(Q)$
                and deduce the necessary and sufficient structure of
                a solution $\widetilde{\varphi}_{Q}$ to the inner
                problem (\ref{ip view}) (Theorem~\ref{theorem opt Test NP}).
    \item[(iv)] Apply Theorem~\ref{theorem dualproblem} and~\ref{theorem opt Test NP} to the primal
            problem (\ref{statop}) and deduce the necessary and sufficient structure of
            a solution $\widetilde{\varphi}$ to (\ref{statop})
            (Theorem~\ref{theorem SP}).
\end{itemize}

\noindent The existence of a solution $\widetilde{\varphi}$ to the
static optimization problem (\ref{statop}) can be shown
analogously to \citet[Proposition~1.3]{Nakano04}, where coherent
risk measures were considered.

\begin{theorem}\label{Th. Existenz}
    There exists a $\widetilde{\varphi}\in R_0$ solving
    the static optimization problem (\ref{statop}) and $\rho((\widetilde{\varphi}-1)H)$ is finite.
\end{theorem}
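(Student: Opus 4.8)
The plan is to solve (\ref{statop}) by the direct method of the calculus of variations: I would equip $L^{\infty}$ with the weak* topology $\sigma(L^{\infty},L^{1})$, show that $R_{0}$ is weakly* compact and that the objective $\varphi\mapsto\rho((\varphi-1)H)$ is weakly* lower semicontinuous, and then invoke the extreme-value principle that a lower semicontinuous function attains its infimum on a nonempty compact set. This topology is the natural one here, because $R_{0}\subseteq\{\varphi\in L^{\infty}:0\leq\varphi\leq1\}$ sits inside an order interval of $L^{\infty}$, and $L^{\infty}$ is the norm dual of $L^{1}$ as fixed in Section~\ref{sec2}.

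First I would verify that $R_{0}$ is weakly* compact. It is contained in the order interval $\{0\leq\varphi\leq1\}$, which is weakly* compact by the Banach--Alaoglu theorem, so it suffices to show that $R_{0}$ is convex and weakly* closed. Convexity is immediate since each $\varphi\mapsto E^{P^{*}}[\varphi H]$ is linear. For closedness, note that by (\ref{SHP}) we have $E[HZ_{P^{*}}]=E^{P^{*}}[H]\leq U_{0}<\infty$, so $HZ_{P^{*}}\in L^{1}$ for every $P^{*}\in\mathcal P_\sigma$; hence the affine functional $\varphi\mapsto E^{P^{*}}[\varphi H]=\langle\varphi,HZ_{P^{*}}\rangle$ is weakly* continuous and each set $\{\varphi:E^{P^{*}}[\varphi H]\leq\widetilde V_{0}\}$ is weakly* closed. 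Intersecting these over $P^{*}\in\mathcal P_\sigma$ and with the order interval exhibits $R_{0}$ as a weakly* closed subset of a weakly* compact set, hence weakly* compact; it is nonempty since $\varphi_{0}\in R_{0}$ by Assumption~\ref{assumption rho}.

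Next I would establish weak* lower semicontinuity of the objective. Under Assumption~\ref{assumption rho}, $\rho$ is a proper lower semicontinuous convex risk measure and therefore admits the dual representation (\ref{dkrm1}), so that
\[
\rho((\varphi-1)H)=\sup_{Q\in\mathcal Q}\Big\{E^{Q}[(1-\varphi)H]-\sup_{X\in\mathcal A}E^{Q}[-X]\Big\}.
\]
For each fixed $Q\in\mathcal Q$ one has $Z_{Q}\in L^{\infty}$, hence $HZ_{Q}\in L^{1}$, and the map $\varphi\mapsto E^{Q}[(1-\varphi)H]=\langle 1-\varphi,HZ_{Q}\rangle$ is affine and weakly* continuous; subtracting the constant $\sup_{X\in\mathcal A}E^{Q}[-X]$ preserves this. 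The objective is thus a pointwise supremum of weakly* continuous affine functionals and therefore weakly* lower semicontinuous. Consequently it attains its infimum $p$ on the nonempty weakly* compact set $R_{0}$, yielding a minimizer $\widetilde\varphi\in R_{0}$. Finiteness follows at once: $\rho((\widetilde\varphi-1)H)=p\leq\rho((\varphi_{0}-1)H)<\infty$ by the continuity-and-finiteness clause of Assumption~\ref{assumption rho}, while $p>-\infty$ because $\rho$ takes values in $\R\cup\{+\infty\}$.

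The main obstacle is the compactness step. Since $L^{1}$ need not be separable, the weak* topology on $L^{\infty}$ is in general not metrizable, so one cannot simply extract a weak* convergent \emph{subsequence} of a minimizing sequence; the remedy is to avoid sequences altogether and argue topologically as above, where lower semicontinuity plus compactness already force attainment. If one insists on a sequential argument, the alternative I would use is to take a minimizing sequence $(\varphi_{n})\subset R_{0}$ and apply Koml\'os's theorem to obtain a subsequence whose Ces\`aro means $\psi_{N}$ converge $P$-a.s.\ to some $\widetilde\varphi$: convexity of $R_{0}$ keeps $\psi_{N}\in R_{0}$, dominated convergence with dominating function $H\in L^{1}$ places $\widetilde\varphi$ in $R_{0}$ and gives $L^{1}$-convergence $(\psi_{N}-1)H\to(\widetilde\varphi-1)H$, and convexity together with the norm lower semicontinuity of $\rho$ then shows $\rho((\widetilde\varphi-1)H)\leq\liminf_{N}\rho((\psi_{N}-1)H)\leq p$, so that $\widetilde\varphi$ is a minimizer.
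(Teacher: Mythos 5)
Your proof is correct and follows essentially the same route as the paper's: weak* compactness of $R_{0}$ as a weakly* closed subset of a weakly* compact set in $L^{\infty}$, weak* lower semicontinuity of $\varphi\mapsto\rho((\varphi-1)H)$ via the dual representation (\ref{dkrm1}), attainment of the infimum on a compact set, and finiteness from Assumption~\ref{assumption rho}. You merely spell out details the paper asserts without proof (weak* continuity of $\varphi\mapsto E^{P^{*}}[\varphi H]$ from $HZ_{P^{*}}\in L^{1}$, and the sup-of-affine-continuous argument for lower semicontinuity), and the Koml\'os-based sequential alternative is a correct but unnecessary addendum.
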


\begin{proof}
The set of randomized tests $R=\{\varphi: \; \Omega \to[0,1],\;
\mathcal{F}_{T} -\mbox{measurable}\}$ is weakly* compact as a
weakly* closed subset of the weakly* compact unit sphere in
$L^\infty$ (\cite{DunSchw88}, Theorem~V.4.2, V.4.3). Since the map
$\varphi\mapsto \sup_{P^{*}\in
\mathcal{P}_\sigma}E^{P^{*}}[\varphi H]$ is lower semicontinuous
in the weak* topology, the constrained set $R_0$ is weakly*
closed, hence weakly* compact. Because of the lower semicontinuity
of $\varphi\mapsto \sup_{Q\in
\mathcal{Q}}\{E^{Q}[(1-\varphi)H]-\sup_{X\in\mathcal A}E^Q[-X]\}$
in the weak* topology, there exists a $\widetilde{\varphi}\in R_0$
solving (\ref{statop}). $\rho((\widetilde{\varphi}-1)H)$ is finite
since $\rho$ is assumed to be finite in some $(\varphi_0 -1)H$
with $\varphi_0\in R_0$ (Assumption \ref{assumption rho}).
\end{proof}

\begin{remark}\label{remark uniqueness}
For measures of risk that are strictly convex one can additionally
show that any two solutions coincide $P-a.s.$ on $\{\omega: H>0\}$
\citep[see][Proposition~3.1]{FoeLeu00}. A convex risk measure cannot
be strictly convex since the translation property of $\rho$
(Definition \ref{def krm} (ii)) and $\rho(0)=0$ imply the linearity
of $\rho$ on the one dimensional subspace of $L^{1}$ generated by
the random variable equal to $1$ a.s. (see \citet{Ham05} for further
properties of translative functions). This means that for convex
risk measures one can only show the existence, not the essential
uniqueness of the solution.
\end{remark}

\subsection{The Dual Problem}\label{secSP}

In this subsection, we will construct the dual problem of
(\ref{statop}) and prove the validity of strong duality. We obtain
the existence of a dual solution and show that the problem is a
saddle point problem.

\begin{theorem}\label{theorem dualproblem}
    Strong duality holds: The values of the primal problem (\ref{statop})
    and its dual problem are equal ($p=d$), where
    the dual problem of (\ref{statop}) is the
    following with value $d$
    \begin{equation}
        \label{dual} d=\sup_{Q\in \mathcal{Q}}\{\inf_{\varphi\in
        R_{0}}\{E^{Q}[(1-\varphi)H]-\sup_{X\in\mathcal A}E^Q[-X]\}\}.
    \end{equation}
    $(\widetilde{Z}_{Q},\widetilde{\varphi})$ is a saddle point of the
    functional $E^{Q}[(1-\varphi)H]-\sup_{X\in\mathcal A}E^Q[-X]$, where
    $\widetilde{\varphi}$ is the solution of (\ref{statop}) and
    $\widetilde{Z}_{Q}=\frac{d\widetilde{Q}}{dP}$ is the solution of
    (\ref{dual}). Thus,
    \begin{equation}
        \nonumber
        \min_{\varphi\in R_{0}}\{\max_{Q\in \mathcal{Q}}
        \{E^{Q}[(1-\varphi)H]-\sup_{X\in\mathcal A}E^Q[-X]\}\}
        =\max_{Q\in \mathcal{Q}}\{\min_{\varphi\in R_{0}}
        \{E^{Q}[(1-\varphi)H]-\sup_{X\in\mathcal A}E^Q[-X]\}\}.
    \end{equation}
\end{theorem}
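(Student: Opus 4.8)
The plan is to read the primal value $p$ and the dual value $d$ as the two sides of a minimax identity for the saddle functional $L(\varphi,Q):=E^{Q}[(1-\varphi)H]-\sup_{X\in\mathcal A}E^Q[-X]$ and to close the gap with a minimax theorem. First I would substitute the dual representation (\ref{dkrm1}) into the primal objective, so that $\rho((\varphi-1)H)=\sup_{Q\in\mathcal Q}L(\varphi,Q)$ and hence $p=\min_{\varphi\in R_0}\sup_{Q\in\mathcal Q}L(\varphi,Q)$, while by definition $d=\sup_{Q\in\mathcal Q}\inf_{\varphi\in R_0}L(\varphi,Q)$. Weak duality $d\le p$ is automatic; the content of strong duality is the reverse inequality.

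Next I would verify the convex--concave structure together with the semicontinuity needed on the compact side. For fixed $Q$ the map $\varphi\mapsto L(\varphi,Q)=E^{Q}[H]-E[\varphi Z_Q H]-\sup_{X\in\mathcal A}E^Q[-X]$ is affine; since $Z_Q H\in L^1$, the linear term $\varphi\mapsto E[\varphi Z_Q H]$ is continuous for the weak* topology $\sigma(L^\infty,L^1)$, so $L(\cdot,Q)$ is weak* continuous, in particular weak* lower semicontinuous and convex on $R_0$. For fixed $\varphi$, equation (\ref{r*}) identifies $\sup_{X\in\mathcal A}E^Q[-X]=\rho^*(-Z_Q)$, whence $L(\varphi,Q)=E[Z_Q(1-\varphi)H]-\rho^*(-Z_Q)$ is a linear functional of $Z_Q$ minus the convex map $Z_Q\mapsto\rho^*(-Z_Q)$, hence concave on the convex set $\{Z_Q:Q\in\mathcal Q\}$.

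With these properties I would invoke Sion's minimax theorem. The feasible set $R_0$ is convex and weak* compact (established in the proof of Theorem~\ref{Th. Existenz}) and $\{Z_Q:Q\in\mathcal Q\}$ is convex; since $L$ is weak* lower semicontinuous and convex in $\varphi$ on this compact convex set and concave in $Q$---and no topology is required on the $Q$-variable---the theorem yields $\min_{\varphi\in R_0}\sup_{Q}L(\varphi,Q)=\sup_{Q}\inf_{\varphi\in R_0}L(\varphi,Q)$, that is $p=d$. This is the heart of the statement; combined with the attainment of the primal minimum at $\widetilde\varphi$ (Theorem~\ref{Th. Existenz}) it already delivers the left-hand side of the concluding identity as a genuine minimum.

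It remains to exhibit the saddle point $(\widetilde Z_Q,\widetilde\varphi)$. The primal minimum is attained at $\widetilde\varphi$ by Theorem~\ref{Th. Existenz}; under strong duality a saddle point then exists as soon as the dual supremum is attained, and this dual attainment is the main obstacle, since the minimax theorem returns only equality of values and not a dual maximizer. I would secure it through subdifferentiability of $\rho$ at $x_0:=(\widetilde\varphi-1)H$: by the extended Namioka theorem, Assumption~\ref{assumption rho} makes $\rho$ finite and continuous at the feasible point $(\varphi_0-1)H$, which is the constraint qualification validating the subdifferential sum rule, so that the first-order optimality condition for $\widetilde\varphi$ splits as $-\widetilde Z_Q\in\partial\rho(x_0)$ and $\widetilde Z_Q H\in N_{R_0}(\widetilde\varphi)$ for a suitable density $\widetilde Z_Q$. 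The membership $-\widetilde Z_Q\in\partial\rho(x_0)$ gives $L(\widetilde\varphi,\widetilde Q)=\rho(x_0)=\sup_{Q}L(\widetilde\varphi,Q)$, so $\widetilde Q$ maximizes $L(\widetilde\varphi,\cdot)$; the normal-cone condition reads $E^{\widetilde Q}[\varphi H]\le E^{\widetilde Q}[\widetilde\varphi H]$ for all $\varphi\in R_0$, so $\widetilde\varphi$ minimizes $L(\cdot,\widetilde Q)$. These two statements are precisely the saddle point property for $(\widetilde Z_Q,\widetilde\varphi)$, and together with $p=d$ they yield the concluding $\min$-$\max=\max$-$\min$ identity.
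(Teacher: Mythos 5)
Your argument is correct, but it reaches the conclusion by a genuinely different route than the paper. The paper rewrites the primal as $\inf_{\varphi}\{f(\varphi)+g(A\varphi)\}$ with $f=\mathcal I_{R_0}$, $g(X)=\rho(X-H)$, $A\varphi=H\varphi$, and invokes the Fenchel duality theorem of Ekeland--Temam (Theorem~III.4.1, Remark~III.4.2): the continuity of $\rho$ at $(\varphi_0-1)H$ from Assumption~\ref{assumption rho} serves as the constraint qualification and delivers $p=d$ \emph{and} the existence of a dual maximizer in a single stroke; the dual is then identified with (\ref{dual}) by computing $f^*(A^*Y)$ and $g^*(-Y)$ explicitly, and the saddle point follows from the same short inequality chain you use. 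You instead split the work: the value identity $p=d$ comes from a minimax theorem exploiting the weak* compactness and convexity of $R_0$ together with the affine/concave structure of $L$, and dual attainment comes separately from the subdifferential sum rule $0\in N_{R_0}(\widetilde\varphi)+A^*\partial\rho((\widetilde\varphi-1)H)$ under the same constraint qualification, using $\partial\rho(x_0)\subseteq\dom\rho^*\subseteq\{-Z_Q:Q\in\mathcal Q\}$ to read the multiplier as a density $\widetilde Z_Q$. This decomposition is instructive --- it shows that the equality of values needs only compactness of $R_0$, while Assumption~\ref{assumption rho} is really doing its work in producing the dual solution --- and your subdifferential step is essentially the Fenchel theorem in pointwise form, so the two arguments converge there. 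Two small points of care: Sion's theorem as usually stated does require upper semicontinuity and quasiconcavity of $L(\varphi,\cdot)$ with respect to \emph{some} topology on the $Q$-side, so either cite Ky Fan's 1953 concavelike version (which genuinely needs no topology on $Y$) or note that $Z_Q\mapsto E[Z_Q(1-\varphi)H]-\rho^*(-Z_Q)$ is weak* upper semicontinuous because $\rho^*$ is weak* lower semicontinuous; and since $\rho^*(-Z_Q)$ may equal $+\infty$, you should restrict the $Q$-variable to $\{Q:\,-Z_Q\in\dom\rho^*\}$ (a convex set, and this changes neither $p$ nor $d$) to keep $L$ real-valued before applying the minimax theorem.
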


\begin{proof}
Problem (\ref{statop}) can be rewritten as
\[
    p=\min_{\varphi\in L^{\infty}}\{\rho\of{(\varphi-1)H}+\mathcal I_{R_{0}}(\varphi)\}.
\]
We denote $f(\varphi):=\mathcal I_{R_{0}}(\varphi)$ and
$g(A\varphi):=\rho\of{A\varphi-H}=\rho\of{(\varphi-1)H}$, where
the linear and continuous operator $A:L^{\infty}\to L^{1}$ is
defined by $A\varphi:=H\varphi$. The Fenchel dual problem is (see
\cite{EkeTem76}, Chapter~III, equation~(4.18))
\begin{equation}
    \label{du}
    d=\sup_{Y\in\,L^{\infty}}\{-f^{*}(A^{*}Y)-g^{*}(-Y)\},
\end{equation}
where $A^*$ is the adjoined operator of $A$ and $f^*, g^*$ are the
conjugate functions of $f$ and $g$, respectively. The value $p$ of
the primal problem is finite (Theorem~\ref{Th. Existenz}). The
function $f:\;L^{\infty}\to\R\cup\{+\infty\}$ is convex because of
the convexity of $R_{0}$. The function
$g:\;L^{1}\to\R\cup\{+\infty\}$ is convex since $\rho$ is convex.
Since $\rho$ is assumed to be continuous and finite in some
$(\varphi_0 -1)H$ with $\varphi_{0}\in R_{0}$
(Assumption~\ref{assumption rho}) we have strong duality $p=d$
(Theorem~III.4.1 and Remark~III.4.2 in \cite{EkeTem76}).
\\
The adjoined operator $A^*:L^{\infty}\to ba(\Omega,\mathcal{F},P)$
of A has to satisfy by definition the following equations:
\begin{eqnarray}
    \label{selbstadjOperator}\forall\, Y\in L^{\infty},\forall
    \varphi\in L^{\infty}:\quad
    \langle A^{*}Y,\varphi\rangle=\langle Y,A\varphi\rangle=E[\varphi HY].
\end{eqnarray}
To establish the dual problem, we calculate the conjugate
functions $f^{*}$ and $g^{*}$. With (\ref{selbstadjOperator}), we
obtain
\begin{eqnarray*}
    f^{*}(A^{*}Y)=\sup_{\varphi\in L^{\infty}}\{\langle A^{*}Y,\varphi\rangle-f(\varphi)\}
    =\sup_{\varphi\in R_{0}}E[\varphi HY].
\end{eqnarray*}
The function $g$ is defined by $g(X)=\rho(X-H)$. Its conjugate
function $g^*:\;L^{\infty}\to\R\cup\{+\infty\}$ is
\citep[Theorem~2.3.1 (vi)]{Zalinescu02}:
\begin{eqnarray*}
    g^{*}(Y)&=&\rho^{*}(Y)+\langle Y,H\rangle.
\end{eqnarray*}
Since $\dom \rho^*\subseteq\{-Z_Q:Q\in\mathcal Q\}$ and
$\rho^*(Y)=\sup_{X\in\mathcal A}E[XY]$ for $Y\in L^\infty$ with
$E[Y]=-1$ (see equation (\ref{r*})), the dual problem (\ref{du})
with value $d$ is
\begin{equation}
    \label{dualstruktur}d=\sup_{Q\in \mathcal{Q}}\{\inf_{\varphi\in
    R_{0}}\{E^{Q}[(1-\varphi)H]-\sup_{X\in\mathcal A}E^Q[-X]\}\}.
\end{equation}
The existence of a solution $\widetilde{Z}_{Q}$ to the dual
problem follows from the validity of strong duality \citep[see
Theorem~III.4.1,][]{EkeTem76}. Let $\widetilde{\varphi}$ be the
solution to the primal problem (\ref{statop}) (see Theorem
\ref{Th. Existenz}). Since
\begin{eqnarray*}
    p&=&\sup_{Q\in\mathcal{Q}}\{E^{Q}[(1-\widetilde{\varphi})H]-\sup_{X\in\mathcal A}E^Q[-X]\}\geq
    E^{\widetilde Q}[(1-\widetilde{\varphi})H]-\sup_{X\in\mathcal A}E^{\widetilde Q}[-X],
    \\
    d&=&\inf_{\varphi\in R_{0}}\{E^{\widetilde Q}[(1-\varphi)H]-\sup_{X\in\mathcal A}E^{\widetilde Q}[-X]\}\leq
    E^{\widetilde Q}[(1-\widetilde{\varphi})H]-\sup_{X\in\mathcal A}E^{\widetilde Q}[-X]
\end{eqnarray*}
and because of strong duality, we have
\[
 E^{\widetilde Q}[(1-\widetilde{\varphi})H]-\sup_{X\in\mathcal A}E^{\widetilde Q}[-X]\leq
  p=d\leq E^{\widetilde Q}[(1-\widetilde{\varphi})H]-\sup_{X\in\mathcal A}E^{\widetilde Q}[-X].
\]
Hence,
\begin{equation*}
    \min_{\varphi\in R_{0}}\{\max_{Q\in \mathcal{Q}}\{E^{Q}[
  (1-\varphi)H]-\sup_{X\in\mathcal A}E^Q[-X]\}\}=\max_{Q\in \mathcal{Q}}\{\min_{\varphi\in
    R_{0}}\{E^{Q}[(1-\varphi)H]-\sup_{X\in\mathcal A}E^Q[-X]\}\}.
\end{equation*}
Thus, $(\widetilde{Z}_{Q},\widetilde{\varphi})$ is a saddle point
of the function $E^{Q}[(1-\varphi)H]-\sup_{X\in\mathcal
A}E^Q[-X]$.
\end{proof}

\subsection{The Inner Problem of the Dual Problem}\label{sec4.3}
In this subsection, we consider the inner problem of
the dual problem (\ref{dual}) for an arbitrary, but fixed
$Q\in\mathcal{Q}$. We give a result about the structure of a
solution. This makes it possible to deduce a result about a saddle
point of Theorem \ref{theorem dualproblem} in our main theorem in
the next subsection. That means, we obtain a result about the
structure of a solution of the static optimization problem
(\ref{statop}).
\\
First let us consider the inner problem of the dual problem
(\ref{dual}) for a $Q\in\mathcal{Q}$ and let us denote with
$p^i(Q)$ its optimal value:
\begin{equation}
    \label{ip}
    p^i(Q):=\max_{\varphi\in R_0}E^{Q}[\varphi H].
\end{equation}

\begin{lemma}\label{lemma existence ip}
There exists a solution $\widetilde{\varphi}_{Q}$ to problem
(\ref{ip}) and $p^i(Q)$ is finite.
\end{lemma}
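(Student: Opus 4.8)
The plan is to prove Lemma~\ref{lemma existence ip} by essentially the same compactness argument used in Theorem~\ref{Th. Existenz}, since the inner problem (\ref{ip}) is a linear optimization problem over the same feasible set $R_0$. First I would recall that the full set of randomized tests $R=\{\varphi:\Omega\to[0,1],\ \mathcal{F}_T\text{-measurable}\}$ is weakly* compact in $L^\infty$, being a weakly* closed subset of the unit ball (by Alaoglu's theorem, as cited via \cite{DunSchw88}, Theorem~V.4.2, V.4.3). I would then invoke the fact, already established in the proof of Theorem~\ref{Th. Existenz}, that the constraint map $\varphi\mapsto\sup_{P^*\in\mathcal P_\sigma}E^{P^*}[\varphi H]$ is weakly* lower semicontinuous, so that $R_0$ is weakly* closed and hence weakly* compact as a closed subset of the compact set $R$.

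Next I would turn to the objective functional. For fixed $Q\in\mathcal Q$ we have $Z_Q\in L^\infty$ and $H\in L^1$, so $HZ_Q\in L^1$ and the map $\varphi\mapsto E^Q[\varphi H]=E[\varphi H Z_Q]=\langle HZ_Q,\varphi\rangle$ is precisely the bilinear pairing of $\varphi\in L^\infty$ with the fixed element $HZ_Q\in L^1$. This is exactly a weakly* continuous linear functional on $L^\infty$, in particular weakly* upper semicontinuous. A weakly* upper semicontinuous functional attains its maximum on a weakly* compact set, so there exists $\widetilde{\varphi}_Q\in R_0$ achieving the supremum in (\ref{ip}); this establishes existence of a solution.

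Finiteness of $p^i(Q)$ is immediate and requires no real work: since $0\le\varphi\le 1$ and $H\ge 0$, for any $\varphi\in R_0$ we have $0\le E^Q[\varphi H]\le E^Q[H]=E[HZ_Q]\le \|Z_Q\|_\infty\, E[H]<+\infty$, the last bound using $H\in L^1$ and $Z_Q\in L^\infty$. Thus $0\le p^i(Q)<+\infty$.

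I do not expect any genuine obstacle here, as the lemma is a routine existence-via-compactness statement and the real content of the section lies in the subsequent dual and structural results (Theorem~\ref{theorem opt Test NP}). The only point meriting care is the verification that $R_0$ is nonempty and weakly* closed, but nonemptiness follows from Assumption~\ref{assumption rho} (which supplies $\varphi_0\in R_0$) and the closedness was already argued in Theorem~\ref{Th. Existenz}; I would simply reference that argument rather than repeat it.
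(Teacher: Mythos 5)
Your proposal is correct and follows essentially the same route as the paper: the paper's proof likewise observes that $R_0$ is weakly* compact (citing the argument from Theorem~\ref{Th. Existenz}) and that $\varphi\mapsto E^{Q}[\varphi H]$ is weakly* continuous for $Q\in\mathcal Q$ since $HZ_Q\in L^1$. Your additional explicit bound giving finiteness of $p^i(Q)$ is a harmless elaboration of what the paper leaves implicit.
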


\begin{proof}
The assertion follows since $R_0$ is weakly* compact (see proof of
Theorem \ref{Th. Existenz}) and $\varphi\mapsto E^{Q}[\varphi H]$
is continuous in the weak* topology for all $Q\in\mathcal Q$.
\end{proof}

\begin{remark}
Problem (\ref{ip}) can be identified as a problem of test theory.
Let $R=\{\varphi: \; \Omega \to[0,1],\;
\mathcal{F}_{T}-\mbox{measurable}\}$ be the set of randomized
tests and let us define the measures $O$ and $O^{*}=O^{*}(P^{*})$
by $\frac{dO}{dQ}=H$ and $\frac{dO^{*}}{dP^{*}}=H$ for
$P^{*}\in{\mathcal{P}_\sigma}$. Problem (\ref{ip}) turns into
\begin{equation*}
    \max_{\varphi\in R}E^{O}[\varphi]
\end{equation*}
subject to
\begin{equation*}
    \forall P^{*}\in{\mathcal{P}_\sigma}:\quad E^{O^{*}}[\varphi ]\leq
    \widetilde{V}_{0}=:\alpha.
\end{equation*}
This is equivalent of looking for an optimal test
$\widetilde{\varphi}_{Q}$ when testing the compound hypothesis
$H_{0}=\{O^{*}(P^{*}):P^{*}\in{\mathcal{P}_\sigma}\}$, parameterized
by the class of equivalent sigma-martingale measures, against the
simple alternative hypothesis $H_{1}=\{O\}$ in a generalized sense.
In the generalized test problem \citep[Theorem~2.79]{Witting85}, $O$
and $O^{*}$ are not necessarily probability measures, but measures
and the significance level $\alpha$ is generalized to be a positive
continuous function $\alpha(P^{*})$. \citet{Witting85} deduced a
sufficient optimality condition for the optimal test
$\widetilde{\varphi}_Q$ and verified the validity of weak duality.
\end{remark}

\noindent We want to show that strong duality is satisfied. In this
case, the typical $0$-$1$-structure of $\widetilde{\varphi}_Q$ is
sufficient and necessary for optimality.
\\
We assign to (\ref{ip}) the following Fenchel dual problem and
denote by $d^i(Q)$ its optimal value
\begin{equation}
    \label{lambda}
    d^i(Q)=\inf_{\lambda\in \Lambda_{+}}\Big\{\int\limits_{\Omega}[HZ_{Q}-H\int\limits_{\mathcal
    P_\sigma}Z_{P^{*}}d\lambda]^{+}dP
    +\widetilde{V}_{0}\lambda(\mathcal P_\sigma)\Big\}.
\end{equation}
The following strong duality theorem holds true.

\begin{theorem}\label{theorem opt Test NP}
Strong duality holds true for problems (\ref{ip}) and
(\ref{lambda}), i.e.,
\[
        \forall Q\in\mathcal Q:\quad d^i(Q)=p^i(Q).
\]
Moreover, for each $Q\in\mathcal Q$ there exists a solution
$\widetilde{\lambda}_{Q}$ to (\ref{lambda}). The optimal
randomized test $\widetilde{\varphi}_{Q}$ of (\ref{ip}) has the
following structure:
\begin{eqnarray}
    \label{NP Loesung 1}\widetilde{\varphi}_{Q}(\omega)=\left\{\begin{array}{r@{\quad:\quad}l}
    1 & H Z_{Q}>H\int_{\mathcal{P}_\sigma}Z_{P^{*}}d\widetilde{\lambda}_{Q}(P^{*})
    \\[0.3cm]
    0 & H Z_{Q}<H\int_{\mathcal{P}_\sigma}Z_{P^{*}}d\widetilde{\lambda}_{Q}(P^{*})
    \end{array}\right. \quad
    P -a.s.
\end{eqnarray} and
\begin{equation}
   \label{NP Loesung 2}
   E^{P^{*}}[\widetilde{\varphi}_{Q}H]=\widetilde{V}_{0}\quad\quad\widetilde{\lambda}_{Q} -a.s.
\end{equation}
\end{theorem}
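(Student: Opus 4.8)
The plan is to read (\ref{ip}) and (\ref{lambda}) as the two sides of a Lagrangian saddle-value identity and to extract the structure of $\widetilde{\varphi}_Q$ from complementary slackness. For $\varphi\in R$ (the full set of randomized tests) and $\lambda\in\Lambda_+$, introduce the Lagrangian
\[
L(\varphi,\lambda)=E^{Q}[\varphi H]-\int_{\mathcal P_\sigma}\bigl(E^{P^{*}}[\varphi H]-\widetilde V_0\bigr)\,d\lambda(P^{*}).
\]
The crucial preliminary is Fubini: since $0\le E^{P^{*}}[\varphi H]\le E^{P^{*}}[H]\le U_0$ by (\ref{SHP}), the map $(P^{*},\omega)\mapsto Z_{P^{*}}(\omega)H(\omega)$ is $\lambda\otimes P$-integrable for every finite $\lambda$, so $H\int_{\mathcal P_\sigma}Z_{P^{*}}d\lambda\in L^1$ and
\[
L(\varphi,\lambda)=\int_\Omega \varphi\,\Bigl(HZ_Q-H\!\int_{\mathcal P_\sigma}Z_{P^{*}}\,d\lambda\Bigr)\,dP+\widetilde V_0\,\lambda(\mathcal P_\sigma).
\]
Optimizing pointwise over $\varphi(\omega)\in[0,1]$ makes $\sup_{\varphi\in R}L(\varphi,\lambda)$ equal to the integrand of (\ref{lambda}), so $d^i(Q)=\inf_{\lambda\in\Lambda_+}\sup_{\varphi\in R}L(\varphi,\lambda)$; conversely $\inf_{\lambda\in\Lambda_+}L(\varphi,\lambda)$ equals $E^{Q}[\varphi H]$ when $\varphi\in R_0$ and $-\infty$ otherwise (put point mass at a $P^{*}$ violating the bound in (\ref{statop2})), so $p^i(Q)=\sup_{\varphi\in R}\inf_{\lambda\in\Lambda_+}L(\varphi,\lambda)$.

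Weak duality $p^i(Q)\le d^i(Q)$ and the sufficiency half of the theorem are then immediate: for $\varphi\in R_0$ the added slack $\int(\widetilde V_0-E^{P^{*}}[\varphi H])\,d\lambda\ge0$, and $\int\varphi\,g\,dP\le\int g^+\,dP$ with $g=HZ_Q-H\int Z_{P^{*}}d\lambda$, so any pair meeting (\ref{NP Loesung 1}) and (\ref{NP Loesung 2}) forces equality throughout and is optimal. For strong duality I would apply Sion's minimax theorem to $L$ on $R\times\Lambda_+$: $R$ is convex and weakly* compact (proof of Theorem~\ref{Th. Existenz}), and $L$ is affine; using $HZ_Q\in L^1$ together with the uniform bound $U_0$ in a dominated-convergence estimate for the $\lambda$-term it is weakly* upper semicontinuous in $\varphi$, while it is affine, hence convex and lower semicontinuous, in $\lambda$. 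The minimax identity $\sup_{\varphi}\inf_\lambda L=\inf_\lambda\sup_\varphi L$ then yields $p^i(Q)=d^i(Q)$.

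Next I would establish attainment of the dual infimum. Along a minimizing sequence $\lambda_n$ the objective dominates $\widetilde V_0\,\lambda_n(\mathcal P_\sigma)$, so the total masses stay bounded; extracting a limit point $\widetilde\lambda_Q$ in the weak* topology on the corresponding ball of finite measures on $(\mathcal P_\sigma,\mathcal S)$ and invoking lower semicontinuity of the dual functional in $\lambda$ produces an optimal $\widetilde\lambda_Q$. Finally, combining the optimal $\widetilde\varphi_Q$ from Lemma~\ref{lemma existence ip} with $\widetilde\lambda_Q$, strong duality forces every inequality in the weak-duality chain to become an equality: vanishing of the slack term gives $E^{P^{*}}[\widetilde\varphi_Q H]=\widetilde V_0$ for $\widetilde\lambda_Q$-a.e.\ $P^{*}$, i.e.\ (\ref{NP Loesung 2}), while $\int\widetilde\varphi_Q\,g\,dP=\int g^+\,dP$ forces $\widetilde\varphi_Q=1$ on $\{g>0\}$ and $\widetilde\varphi_Q=0$ on $\{g<0\}$, i.e.\ (\ref{NP Loesung 1}), with $\widetilde\varphi_Q$ left free on the boundary set $\{g=0\}$.

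I expect the main obstacle to be the attainment step together with the topological bookkeeping of the multiplier space: $\mathcal P_\sigma$ carries only the $\sigma$-algebra generated by all its subsets, so the compactness of norm-bounded sets of finite measures, and the joint measurability needed to define $\int Z_{P^{*}}d\lambda$ and to run Fubini, are the delicate points; relatedly, verifying the weak* upper semicontinuity of the $\lambda$-integral term in $\varphi$ (rather than mere pointwise-in-$P^{*}$ continuity) is precisely what the uniform bound $U_0$ from (\ref{SHP}) is there to control.
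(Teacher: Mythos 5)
Your overall architecture --- a Lagrangian $L(\varphi,\lambda)$ on $R\times\Lambda_+$, Fubini to pass to the pointwise form, weak duality plus complementary slackness to read off (\ref{NP Loesung 1}) and (\ref{NP Loesung 2}), and a minimax theorem for $p^i(Q)=d^i(Q)$ --- is a legitimate alternative to the paper's route, which instead sets up a Fenchel dual in the paired spaces $(\mathcal L,\Lambda)$ with $\mathcal L$ carrying the Mackey topology $\tau(\mathcal L,\Lambda)$ and invokes the Borwein--Zhu duality theorem under the core condition $\textbf{0}\in\core(\mathcal L_{+}-\widetilde{V}_{0}\textbf{1})$. Your Sion argument for the minimax identity is sound: $R$ is convex and weakly* compact, $L(\cdot,\lambda)$ is affine and weak*-continuous because $H\int_{\mathcal P_\sigma}Z_{P^{*}}d\lambda\in L^{1}$, and $L(\varphi,\cdot)$ is affine and $\sigma(\Lambda,\mathcal L)$-continuous since $P^{*}\mapsto E^{P^{*}}[\varphi H]$ is bounded by $U_{0}$ and every function on $(\mathcal P_\sigma,\mathcal S)$ is measurable. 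The complementary-slackness extraction of the $0$-$1$ structure is essentially the computation the paper performs.

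The genuine gap is the attainment of the dual infimum, and it is not merely bookkeeping. You propose to take a minimizing sequence $\lambda_{n}$ with bounded total mass and extract a weak* limit point in ``the ball of finite measures on $(\mathcal P_\sigma,\mathcal S)$''. But bounded sets of countably additive measures on $(\mathcal P_\sigma,\mathcal S)$, with $\mathcal S$ the power set, are not relatively compact in $\sigma(\Lambda,\mathcal L)$: if $\mathcal P_\sigma$ is countably infinite, the sequence of Dirac masses $\delta_{P^{*}_{n}}$ has no cluster point in $\Lambda_{+}$, since any such cluster point would assign mass $0$ to every singleton yet total mass $1$. Alaoglu only yields cluster points in $ba(\mathcal P_\sigma,\mathcal S)$, i.e.\ finitely additive limits, which would not produce the $\widetilde{\lambda}_{Q}\in\Lambda_{+}$ that the theorem asserts; without $\widetilde{\lambda}_{Q}$ the necessity half of (\ref{NP Loesung 1})--(\ref{NP Loesung 2}) collapses. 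The paper closes precisely this hole by a different mechanism: it endows $\mathcal L$ with $\tau(\mathcal L,\Lambda)$ so that the topological dual is $\Lambda$ itself, and then the Fenchel strong-duality theorem of Borwein and Zhu delivers the existence of a dual solution in $\Lambda_{+}$ as part of its conclusion, the constraint qualification being $B\varphi_{0}=\textbf{0}\in\core(\mathcal L_{+}-\widetilde{V}_{0}\textbf{1})$, which holds because $\widetilde{V}_{0}>0$. To complete your proof you would need either such an attainment theorem or a separate argument (for instance, a reduction of the multiplier to one supported on finitely many $P^{*}$) in place of the compactness claim.
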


\begin{proof}
Let $\mathcal L$ be the linear space of all bounded and measurable
real functions on $(\mathcal{P}_\sigma, \mathcal S)$ with pointwise
addition, multiplication with real numbers and pointwise partial
order $l_1\leq l_2\Leftrightarrow l_2-l_1\in\mathcal
L_+:=\{l\in\mathcal L:\forall P^*\in {\mathcal{P}_\sigma}:l(P^*)\geq
0\}$. We recall that $\mathcal S$ is the $\sigma$-algebra generated
by all subsets of $\mathcal P_\sigma$.
\\
Let $\Lambda$ be the space of all $\sigma$-additive signed measures
on $(\mathcal P_\sigma, \mathcal S)$ of bounded variation. We regard
$\mathcal L$ and $\Lambda$ as the duality pair associated with the
bilinear form $\langle l, \lambda\rangle=\int_{\mathcal
P_\sigma}ld\lambda$ for $l\in\mathcal L$ and $\lambda\in\Lambda$,
see \citet[Theorem~13.5]{AliBor99}. We endow the space $\mathcal L$
with the Mackey topology $\tau(\mathcal L, \Lambda)$, which ensures
that the topological dual of $(\mathcal L,\tau(\mathcal L,
\Lambda))$ is $\Lambda$ and that $\mathcal L$ is a barrelled space
(\cite{HusKha78}, Corollary~II.2, II.4).
\\
We define a linear and continuous operator
$B:(L^\infty,\|\cdot\|_{L^\infty})\to(\mathcal L,\tau(\mathcal L,
\Lambda))$ by $(B\varphi)(P^*):=-E^{P^*}[H\varphi]$ for
$P^*\in{\mathcal{P}_\sigma}$. $B$ is continuous since for every
sequence $\varphi_n\rightarrow\varphi$ in
$(L^\infty,\|\cdot\|_{L^\infty})$, it holds that
$B\varphi_n\rightarrow B\varphi$ in $(\mathcal L,\|\cdot\|_{\mathcal
L})$, where $\|\l\|_{\mathcal L}:=\sup_{P^*\in
\mathcal{P}_\sigma}|l(P^*)|$, since
\[
    \sup_{P^*\in \mathcal{P}_\sigma}|B(\varphi_n-\varphi)(P^*)|\leq
    \|\varphi_n-\varphi\|_{L^\infty}U_0
\]
and $U_0<+\infty$ (inequality (\ref{SHP})). Thus, $B\varphi_n$
converges also in the weaker topology $\tau(\mathcal L, \Lambda)$.
We define the functions $\textbf{1},\textbf{0}\in \mathcal L$ by
\[
    \forall P^*\in {\mathcal{P}_\sigma}:\textbf{1}(P^*)=1\in\R,\;\textbf{0}(P^*)=0\in\R.
\]
Problem (\ref{ip}) is
\begin{equation*}
    \max_{\varphi\in R}E^{Q}[\varphi H],
\end{equation*}
\begin{equation}
    \label{NB}
    \forall P^{*}\in{\mathcal{P}_\sigma}:\quad E^{P^{*}}[\varphi H]\leq\widetilde{V}_{0}.
\end{equation}
The constraint (\ref{NB}) can be rewritten as
\[
    \widetilde{V_{0}}\textbf{1}+B\varphi\geq \textbf{0}
    \Leftrightarrow
    B\varphi\in \mathcal L_{+}-\widetilde{V_{0}}\textbf{1}.
\]
Then, we can write problem (\ref{ip}) equivalently as
\begin{equation}
    \label{-p neu}-p^i(Q)=\min_{\varphi\in L^{\infty}}\Big\{-E^{Q}[\varphi H]+\mathcal
    I_{R}(\varphi)
    +\mathcal I_{\mathcal
    L_{+}-\widetilde{V}_{0}\textbf{1}}(B\varphi)\Big\}.
\end{equation}
Let us define the functions $f(\varphi):=-E^{Q}[\varphi H]+\mathcal
I_{R}(\varphi)$ and $g(B\varphi):=\mathcal I_{\mathcal
L_{+}-\widetilde{V}_{0}\textbf{1}}(B\varphi)$ in (\ref{-p neu}). We
want to establish the dual problem of (\ref{-p neu}) as in
\citet{EkeTem76} (Chapter~III, equation~(4.18)):
\begin{equation}
    \label{-di}
    -d^i(Q)=\sup_{\lambda\in\Lambda}\Big\{-f^{*}(B^{*}\lambda)-g^{*}(-\lambda)\Big\}.
\end{equation}
The conjugate function of $g$ is
\begin{eqnarray*}
    g^{*}(\lambda)&=&\sup_{\widetilde{l}\in\mathcal L}\Big\{\langle \widetilde{l},\lambda\rangle-\mathcal I_{\mathcal
    L_{+}-\widetilde{V}_{0}\textbf{1}}(\widetilde{l})\Big\}
    =\sup_{\widetilde{l}\in\mathcal L_{+}-\widetilde{V}_{0}\textbf{1}}\langle \widetilde{l},\lambda\rangle
    =\sup_{l\in\mathcal L_{+}}\langle l-\widetilde{V}_{0}\textbf{1},\lambda\rangle
    \\
    &=&\sup_{l\in\mathcal L_{+}}\langle
    l,\lambda\rangle-\widetilde{V}_{0}\int\limits_{\mathcal P_\sigma}d\lambda
    =\mathcal I_{\mathcal L_{+}^{*}}(\lambda)-\widetilde{V}_{0}\lambda(\mathcal P_\sigma),
\end{eqnarray*}
where $\mathcal L_{+}^{*}$ is the negative dual cone of $\mathcal
L_{+}$. To establish the conjugate function of $f$
\begin{eqnarray*}
    f^{*}(B^{*}\lambda)&=&\sup_{\varphi\in L^{\infty}}\Big\{\langle
    B^{*}\lambda,\varphi\rangle+E^{Q}[\varphi H]-\mathcal
    I_{R}(\varphi)\Big\},
\end{eqnarray*}
we have to calculate $\langle B^{*}\lambda,\varphi\rangle$, where
$B^{*}:\;\Lambda \to ba(\Omega,\mathcal{F},P)$ is the adjoined
operator of $B$. By definition of $B^*$, the equation $\langle
B^{*}\lambda,\varphi\rangle=\langle\lambda,B\varphi\rangle$ has to
be satisfied for all $\varphi\in L^{\infty},\lambda\in \Lambda$ (see
\cite{AliBor99}, Definition 6.51). Thus,
\[
    \nonumber\forall\varphi\in L^{\infty},\forall\, \lambda\in\Lambda:\;
    \langle B^{*}\lambda,\varphi\rangle=\int_{\mathcal P_\sigma}-E^{P^{*}}[\varphi H]d\lambda.
\]
Hence the conjugate function of $f$ is
\[
     f^{*}(B^{*}\lambda)
    =\sup_{\varphi\in R}\Big\{-\int_{\mathcal P_\sigma}E^{P^{*}}[\varphi H]d\lambda+E^{Q}[\varphi H]\Big\}.
\]
The dual problem (\ref{-di}) becomes
\begin{eqnarray*}
    -d^i(Q)&=&\sup_{\lambda\in\Lambda}\Big\{-\sup_{\varphi\in R}\Big\{
    -\int_{\mathcal P_\sigma}E^{P^{*}}[\varphi H]d\lambda+E^{Q}[\varphi H]\Big\}-\mathcal I_{-\mathcal L_{+}^{*}}(\lambda)
    -\widetilde{V}_{0}\lambda(\mathcal P_\sigma)\Big\},
    \\
    d^i(Q)&=&\inf_{\lambda\in -\mathcal L_{+}^{*}}\Big\{\sup_{\varphi\in R}
    \Big\{-\int_{\mathcal P_\sigma}E^{P^{*}}[\varphi H]d\lambda+E^{Q}[\varphi H]\Big\}
    +\widetilde{V}_{0}\lambda(\mathcal P_\sigma)\Big\},
\end{eqnarray*}
where $-\mathcal L_{+}^{*}=\{\lambda\in\Lambda:\forall l\in \mathcal
L_{+} :\langle l,\lambda\rangle\geq 0\}$. It can be seen easily that
$-\mathcal L_{+}^{*}=\Lambda_{+}$ is the set of finite measures on
$(\mathcal P_\sigma, \mathcal S)$. Thus,
\begin{equation}
    \label{di2}d^i(Q)=\inf_{\lambda\in\Lambda_{+}}\Big\{\sup_{\varphi\in
    R}\Big\{-\int_{\mathcal P_\sigma}E^{P^{*}}[\varphi H]d\lambda+E^{Q}[\varphi H]\Big\}
    +\widetilde{V}_{0}\lambda(\mathcal P_\sigma)\Big\}.
\end{equation}
The spaces $(\Omega,\mathcal{F},P)$ and $(\mathcal P_\sigma,
\mathcal S,\lambda)$ for $\lambda\in \Lambda_{+}$ are positive,
finite measure spaces. Furthermore, the function
$f(\omega,P^*)=H(\omega)Z_{P^*}(\omega)\varphi(\omega)$ is
measurable for all $\varphi\in R$ and it holds that for all
$\lambda\in \Lambda_{+}$ and for all $\varphi\in R$
\[
  \int\limits_{\mathcal P_\sigma}\int\limits_\Omega |HZ_{P^*}\varphi| dP d\lambda
  \stackrel{\|\varphi\|_{L^\infty}\leq 1}{\leq}\sup_{P^{*}\in \mathcal P_\sigma}\|HZ_{P^*}\|_{L^1}\lambda(\mathcal P_\sigma)
  \stackrel{(1)}{<}+\infty.
\]
Thus, we can apply Tonelli's Theorem
\cite[Corollary~III.11.15]{DunSchw88} and obtain that the order of
integration can be changed, i.e., for all $\lambda\in \Lambda_{+}$
and for all $\varphi\in R$
\begin{equation*}
  \int\limits_{\mathcal P_\sigma}\int\limits_\Omega HZ_{P^*}\varphi dP d\lambda
  =\int\limits_\Omega\int\limits_{\mathcal P_\sigma}HZ_{P^*}\varphi d\lambda dP<+\infty.
\end{equation*}
Since in (\ref{di2}) only elements $\lambda\in \Lambda_{+}$ and
$\varphi\in R$ have to be considered, we can change the order of
integration and obtain
\begin{equation}
    \label{d}
    d^i(Q)=\inf_{\lambda\in\Lambda_{+}}\Big\{\sup_{\varphi\in
    R}E[\varphi(HZ_Q-H\int_{\mathcal P_\sigma}Z_{P^*}d\lambda)]
    +\widetilde{V}_{0}\lambda(\mathcal P_\sigma)\Big\}.
\end{equation}
Since $\varphi\in R$ is a randomized test, the supremum over all
$\varphi\in R$ in (\ref{d}) is attained by
\[
    \overline{\varphi}(\omega)=\left\{\begin{array}{r@{\quad:\quad}l}
    1 & HZ_{Q}>H\int_{\mathcal{P}_\sigma}Z_{P^{*}}d\lambda
    \\[0.3cm]
    0 & HZ_{Q}<H\int_{\mathcal{P}_\sigma}Z_{P^{*}}d\lambda
    \end{array}\right. \quad
    P -a.s.
\]
If we denote $HZ_{Q}-H\int\limits_{\mathcal
P_\sigma}Z_{P^{*}}d\lambda=:\nu_{\lambda}(\omega)$, the value of
the dual
    problem is
\begin{equation}\label{di}
    d^i(Q)=\inf_{\lambda\in \Lambda_{+}}\Big\{\int\limits_{\Omega}\nu_{\lambda}^{+}(\omega)dP
    +\widetilde{V}_{0}\lambda(\mathcal P_\sigma)\Big\}.
\end{equation}
This is equation (\ref{lambda}) of Theorem \ref{theorem opt Test
NP}. To verify the validity of strong duality we have to use a
weaker regularization condition than used in Theorem~\ref{theorem
dualproblem}. In \citet[Theorem~5]{BorZhu06} it is shown that strong
duality holds if $f$ and $g$ are convex and lower semicontinuous and
if there exists some $\varphi_{0}\in\dom f$ such that
$B\varphi_{0}\in\core(\mathcal L_{+}-\widetilde{V}_{0}\textbf{1})$,
where $\core(M)$ is the algebraic interior of a set $M$. If we take
$\varphi_{0}\equiv 0,\; 0\in\dom f$, we have to show that
$B\varphi_{0}=\textbf{0}\in\core(\mathcal
L_{+}-\widetilde{V}_{0}\textbf{1})$ for $\widetilde{V}_{0}>0$. This
holds true since for all $l\in\mathcal L$ there exists a $s=s(l)>0$,
such that for all $0\leq t\leq s(l)$ it holds
$\textbf{0}+tl\in(\mathcal L_{+}-\widetilde{V}_{0}\textbf{1})$. We
can choose $s(l)=\widetilde{V}_{0}/\|l\|_{\mathcal L}>0$ for $l\neq
\textbf{0}$ and $s(l)=c>0$ for $l=\textbf{0}$, $c>0$ arbitrary,
where $\|l\|_{\mathcal L}:=\sup_{P^*\in\mathcal P_\sigma}|l(P^*)|$.
\\
The function $f$ is convex and lower semicontinuous since the set
$R$ is convex and closed. The function $g$ is convex since the set
$(\mathcal L_{+}-\widetilde{V}_{0}\textbf{1})$ is convex and it is
lower semicontinuous w.r.t. the Mackey topology $\tau(\mathcal L,
\Lambda)$ if and only if the set $(\mathcal
L_{+}-\widetilde{V}_{0}\textbf{1})$ is closed w.r.t. the Mackey
topology $\tau(\mathcal L, \Lambda)$. To show this, we use that a
convex set is closed w.r.t. the Mackey topology $\tau(\mathcal L,
\Lambda)$ if and only if it is closed w.r.t. the weak topology
$\sigma(\mathcal L, \Lambda)$. Take a net $l_\alpha$ in $(\mathcal
L_{+}-\widetilde{V}_{0}\textbf{1})$ that converges weakly to $l$.
Thus, for all $\lambda\in\Lambda$ it holds $\int_{\mathcal
P_\sigma}l_\alpha d\lambda\rightarrow\int_{\mathcal
P_\sigma}ld\lambda$ and for all $P^*\in\mathcal P_\sigma$ and all
$\alpha$ it holds $l_\alpha(P^*)+\widetilde{V}_{0}\geq 0$. Suppose
there exists a $\overline{\mathcal P}\in\mathcal S$ and a
$\lambda\in\Lambda_+$ with $\lambda(\overline{\mathcal P})>0$ and
$\lambda(\mathcal P_\sigma\backslash\overline{\mathcal P})=0$ such
that $l(P^*)+\widetilde{V}_{0}<0$ for all $P^*\in\overline{\mathcal
P}$. Then, $\int_{\mathcal
P_\sigma}(l(P^*)+\widetilde{V}_{0})d\lambda<0$, which is a
contradiction to $\int_{\mathcal
P_\sigma}(l(P^*)+\widetilde{V}_{0})d\lambda=\lim_{\alpha}\int_{\mathcal
P_\sigma}(l_\alpha(P^*)+\widetilde{V}_{0}) d\lambda\geq 0$ for all
$\lambda\in\Lambda_+$. Since $\mathcal S$ contains also the
one-point subsets of $\mathcal P_\sigma$, it follows that
$l\in(\mathcal L_{+}-\widetilde{V}_{0}\textbf{1})$. Thus, strong
duality holds true.
\\
To demonstrate the dependence from the selected measure
$Q\in\mathcal{Q}$ we use the notation $\widetilde{\varphi}_Q$ and
$\widetilde{\lambda}_Q$ for the primal and dual solution,
respectively. The existence of a solution $\widetilde{\varphi}_Q\in
R_0$ of the primal problem was proved in Lemma~\ref{lemma existence
ip}. Now with strong duality the existence of a dual solution
$\widetilde{\lambda}_Q$ follows and the values of the primal and
dual objective function at $\widetilde{\varphi}_Q$, respectively
$\widetilde{\lambda}_Q$, coincide. This leads to a necessary and
sufficient optimality condition.
\\We consider the primal objective function
\begin{eqnarray*}
    E[\varphi H Z_{Q}]&=&\int\limits_{\Omega}\varphi H Z_{Q}dP
    \\
    &=&\int\limits_{\Omega}\varphi\Big[H Z_{Q}-H\int\limits_{\mathcal{P}_\sigma}Z_{P^{*}}d\lambda\Big]dP
    +\int\limits_{\mathcal{P}_\sigma}\int\limits_{\Omega}\varphi HZ_{P^{*}}dP d\lambda
    \\
    &=&\int\limits_{\Omega}\varphi\nu_{\lambda}^{+}(\omega) dP-\int\limits_{\Omega}\varphi\nu_{\lambda}^{-}(\omega) dP
    +\int\limits_{\mathcal{P}_\sigma}\int\limits_{\Omega}\varphi HZ_{P^{*}}dP d\lambda
\end{eqnarray*}
and subtract it from the dual objective function. Because of
strong duality the difference has to be zero at
$\widetilde{\varphi}_Q$, respectively $\widetilde{\lambda}_Q$:
\[
    \int\limits_{\Omega}\Big[1-\widetilde{\varphi}_Q\Big]\nu_{\widetilde{\lambda}_Q}^{+}(\omega) dP
    +\int\limits_{\Omega}\widetilde{\varphi}_Q\nu_{\widetilde{\lambda}_Q}^{-}(\omega) dP
    +\int\limits_{\mathcal{P}_\sigma}\Big[\widetilde{V}_{0}-\int\limits_{\Omega}\widetilde{\varphi}_Q HZ_{P^{*}}dP\Big]
    d\widetilde{\lambda}_Q=0.
\]
The sum of these three nonnegative integrals is zero if and only
if $\widetilde{\varphi}_Q\in R_0$ satisfies condition (\ref{NP
Loesung 1}) and (\ref{NP Loesung 2}) of Theorem \ref{theorem opt
Test NP}. To stress that $\lambda$ is a measure on $\mathcal
P_\sigma$, we use in Theorem \ref{theorem opt Test NP} the
notation $\lambda(P^{*})$.
\end{proof}
\noindent
For each $Q\in\mathcal{Q}$ there exist a primal and a
dual solution $\widetilde{\varphi}_{Q}$,
$\widetilde{\lambda}_{Q}$, respectively. If $Q=\widetilde{Q}$ is
the solution of the outer problem of (\ref{dual}),
$\widetilde{\varphi}_{\widetilde{Q}}$ is the solution of the
static optimization problem (\ref{statop}).

\subsection{The Saddle Point}\label{sec4.4}
Now, let us consider the saddle point problem described in Theorem
\ref{theorem dualproblem}. With Theorem \ref{theorem opt Test NP} it
follows that
\begin{eqnarray}
    \nonumber&\max\limits_{Q\in \mathcal{Q}}&\hspace{-0.29cm}\min\limits_{\varphi\in
    R_{0}}\{E^{Q}[(1-\varphi)H]-\sup_{X\in\mathcal A}E^Q[-X]\}
    = \max\limits_{Q\in \mathcal{Q}}\{E^{Q}[H]-p^i(Q)-\sup_{X\in\mathcal A}E^Q[-X]\}
    \\
    \nonumber&=& \hspace{-0.32cm}\max\limits_{Q\in \mathcal{Q}}\{E^{Q}[H]-d^i(Q)-\sup_{X\in\mathcal A}E^Q[-X]\}
    \\
    \nonumber&=& \hspace{-0.32cm}\max\limits_{Q\in \mathcal{Q}}\Big\{\max\limits_{\lambda\in \Lambda_{+}}
    \Big\{-E^P[(HZ_{Q}-H\int\limits_{\mathcal P_\sigma}Z_{P^{*}}d\lambda)^{+}]
    -\widetilde{V}_{0}\lambda(\mathcal P_\sigma)\Big\}+E^{Q}[H]-\sup_{X\in\mathcal A}E^Q[-X]\Big\}
    \\
    \nonumber&=& \hspace{-0.32cm}\max\limits_{Q\in \mathcal{Q}, \lambda\in \Lambda_{+}}
    \Big\{E^P[HZ_{Q}\wedge H\int\limits_{\mathcal P_\sigma}Z_{P^{*}}d\lambda]-\widetilde{V}_{0}\lambda(\mathcal P_\sigma) -\sup_{X\in\mathcal A}E^Q[-X]\Big\},
\end{eqnarray}
where $x\wedge y=min(x,y)$. With Theorem \ref{theorem dualproblem}
it follows that $\widetilde{Q}$ attains the maximum w.r.t. $Q\in
\mathcal{Q}$. Theorem \ref{theorem opt Test NP} shows the
existence of a
$\widetilde{\lambda}=\widetilde{\lambda}_{\widetilde{Q}}$ that
attains the maximum w.r.t. $\lambda\in \Lambda_{+}$. Thus, there
exists a pair $(\widetilde{Q},\widetilde{\lambda})$ solving
\begin{equation}
    \label{opt pair}\max_{Q\in \mathcal{Q}, \lambda\in \Lambda_{+}}
    \Big\{E^P[HZ_{Q}\wedge H\int\limits_{\mathcal P_\sigma}Z_{P^{*}}d\lambda]-\widetilde{V}_{0}\lambda(\mathcal P_\sigma) -\sup_{X\in\mathcal A}E^Q[-X]\Big\}.
\end{equation}
Now, our main theorem follows.

\begin{theorem}\label{theorem SP}
Let $(\widetilde{Q},\widetilde{\lambda})$ be the optimal pair in (\ref{opt pair}).
\begin{itemize}
    \item   The solution of the static optimization problem (\ref{statop}) is
        \begin{eqnarray}
            \label{6.1}\widetilde{\varphi}(\omega)=\left\{\begin{array}{r@{\quad:\quad}l}
            1 & H \widetilde{Z}_{Q}>H\int_{\mathcal{P}_\sigma}Z_{P^{*}}d\widetilde{\lambda}(P^{*})
            \\[0.3cm]
            0 & H \widetilde{Z}_{Q}<H\int_{\mathcal{P}_\sigma}Z_{P^{*}}d\widetilde{\lambda}(P^{*})
            \end{array}\right. \quad P -a.s.
        \end{eqnarray}
            with
        \begin{equation}
            \label{6.2}E^{P^{*}}[\widetilde{\varphi}H]=\widetilde{V}_{0}\quad\quad
        \widetilde{\lambda} -a.s.
        \end{equation}
    \item   $(\widetilde{\varphi},\widetilde{Z}_{Q})$ is the saddle point of Theorem \ref{theorem dualproblem}.
    \item   $(\widetilde{V}_{0}, \widetilde{\xi})$ solves the dynamic convex hedging problem (\ref{dynop1}), (\ref{dynop2}), where
            $ \widetilde{\xi}$ is the superhedging strategy of the modified claim $\widetilde{\varphi}H$.
\end{itemize}
\end{theorem}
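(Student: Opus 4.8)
The plan is to read off all three assertions from the duality theorems already in hand, the crucial move being to identify the static solution $\widetilde\varphi$ with the solution of the inner test problem (\ref{ip}) at the optimal dual measure $\widetilde Q$, so that the structural Theorem \ref{theorem opt Test NP} may be invoked. First I would decode what the optimal pair $(\widetilde Q,\widetilde\lambda)$ of (\ref{opt pair}) encodes. Reading the chain of equalities preceding the theorem, maximizing the objective of (\ref{opt pair}) over $\lambda$ for fixed $Q$ returns $E^{Q}[H]-d^i(Q)-\sup_{X\in\mathcal A}E^Q[-X]$, which by the inner strong duality $d^i(Q)=p^i(Q)$ (Theorem \ref{theorem opt Test NP}) equals $\min_{\varphi\in R_0}\{E^{Q}[(1-\varphi)H]-\sup_{X\in\mathcal A}E^Q[-X]\}$. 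Hence $\widetilde Q$ is exactly the maximizer of the dual problem (\ref{dual}), i.e. the dual solution of Theorem \ref{theorem dualproblem}, and $\widetilde\lambda$ is a minimizer of the inner dual $d^i(\widetilde Q)$, namely $\widetilde\lambda=\widetilde\lambda_{\widetilde Q}$.

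Next I would establish the first two bullets together. By Theorem \ref{theorem dualproblem} there is a saddle point $(\widetilde Z_Q,\widetilde\varphi)$ of the functional $E^{Q}[(1-\varphi)H]-\sup_{X\in\mathcal A}E^Q[-X]$, with $\widetilde\varphi$ a solution of (\ref{statop}). The bridge is that the penalty term $\sup_{X\in\mathcal A}E^{\widetilde Q}[-X]$ does not depend on $\varphi$, so that minimizing this functional over $R_0$ at $Q=\widetilde Q$ is the same as maximizing $E^{\widetilde Q}[\varphi H]$, which is precisely the inner problem (\ref{ip}) at $\widetilde Q$. Therefore $\widetilde\varphi$ is an optimal randomized test for (\ref{ip}) and $\widetilde\lambda$ its optimal dual, so the necessary-and-sufficient structure (\ref{NP Loesung 1}), (\ref{NP Loesung 2}) of Theorem \ref{theorem opt Test NP}, specialized to $Q=\widetilde Q$ (whence $Z_Q=\widetilde Z_Q$) and $\lambda=\widetilde\lambda$, becomes exactly (\ref{6.1}) and (\ref{6.2}); this is the first bullet. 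Since $\widetilde\varphi$ then attains the inner minimum at $\widetilde Q$ while $\widetilde Q$ attains the outer maximum, $(\widetilde\varphi,\widetilde Z_Q)$ is the saddle point of Theorem \ref{theorem dualproblem} (the order of the coordinates being immaterial), which is the second bullet.

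The third bullet I would obtain as a direct appeal to Theorem \ref{Theorem Nakano}: with $\widetilde\varphi$ solving (\ref{statop}) and $\widetilde\xi$ the superhedging strategy of $\widetilde\varphi H$, the admissible strategy $(\widetilde V_0,\widetilde\xi)$ solves (\ref{dynop1}), (\ref{dynop2}). To justify that the initial capital is exactly $\widetilde V_0$ and not merely some $\overline V_0\in[\widetilde U_0,\widetilde V_0]$, I would combine (\ref{6.2}), which gives $E^{P^*}[\widetilde\varphi H]=\widetilde V_0$ for $\widetilde\lambda$-a.e. $P^*$, with the membership $\widetilde\varphi\in R_0$, which gives $\sup_{P^*\in\mathcal P_\sigma}E^{P^*}[\widetilde\varphi H]\leq\widetilde V_0$; together these force $\widetilde U_0=\widetilde V_0$, as announced in Remark \ref{Remark U0}.

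I expect the only genuinely delicate point to be the bridge in the second paragraph: recognizing that the inner minimization over $\varphi$ in the saddle functional coincides with the test problem (\ref{ip}), so that the structural result of Theorem \ref{theorem opt Test NP} legitimately applies at $Q=\widetilde Q$, and checking that the $\widetilde\lambda$ delivered by the optimal pair is the same inner-dual solution against which (\ref{NP Loesung 1}), (\ref{NP Loesung 2}) were proved. Once these identifications are secured, everything else is bookkeeping and direct citation of the preceding theorems.
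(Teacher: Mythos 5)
Your proposal is correct and follows exactly the route the paper intends: its proof is literally the one-line citation of Theorems \ref{theorem dualproblem}, \ref{theorem opt Test NP} and \ref{Theorem Nakano}, and the identifications you spell out (that the inner minimization of the saddle functional at $\widetilde{Q}$ is the test problem (\ref{ip}), and that $\widetilde{\lambda}=\widetilde{\lambda}_{\widetilde{Q}}$) are precisely the chain of equalities the paper records just before the theorem. The only nitpick is that your deduction $\widetilde{U}_0=\widetilde{V}_0$ from (\ref{6.2}) silently assumes $\widetilde{\lambda}\not\equiv 0$ (the paper flags this exception in a remark), but this does not affect the third bullet, since Theorem \ref{Theorem Nakano} already yields optimality of $(\widetilde{V}_0,\widetilde{\xi})$ directly.
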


\begin{proof}
The results follow from Theorem \ref{theorem dualproblem},
\ref{theorem opt Test NP} and \ref{Theorem Nakano}.
\end{proof}

\begin{remark}
It follows that there exists a $[0,1]$-valued random variable
$\delta$ such that $\widetilde{\varphi}$ as in Theorem \ref{theorem
SP} satisfies
\begin{equation}
    \label{delta}
   \widetilde{\varphi}(\omega)=I_{\{H \widetilde{Z}_{Q}>H\int_{\mathcal{P}_\sigma}Z_{P^{*}}d\widetilde{\lambda}(P^{*})\}}(\omega)
   +\delta(\omega)I_{\{H \widetilde{Z}_{Q}=H\int_{\mathcal{P}_\sigma}Z_{P^{*}}d\widetilde{\lambda}(P^{*})\}}(\omega),
\end{equation} where $I_{A}(\omega)$ is the stochastic indicator
function equal to one for $\omega\in A$ and zero otherwise. $\delta$
has to be chosen such that $\widetilde{\varphi}$ satisfies
(\ref{6.2}).
\\
We do not have uniqueness of a solution $\widetilde{\varphi}$ (see
Remark~\ref{remark uniqueness}). If we for instance choose $\delta$
to be constant, equations (\ref{6.2}) and (\ref{delta}) lead to one
particular $\delta$ and thus to one particular solution
$\widetilde{\varphi}$.
\end{remark}

\begin{remark}
From equation (\ref{6.2}) it follows, that (except in the case where
$\widetilde{\lambda}$ takes only the value zero) the superhedging
price of the modified claim $\widetilde{\varphi}H$ is equal to the
capital boundary $\widetilde{V}_{0}$. Then, $\widetilde{V}_{0}$ is
the minimal amount of capital that is necessary to solve together
with $\widetilde{\xi}$ the dynamic problem (\ref{dynop1}),
(\ref{dynop2}).
\end{remark}

\noindent To summarize, the admissible strategy that minimizes the
convex shortfall risk consists in superhedging a modified claim
$\widetilde{\varphi}H$ that has the form of a knock-out option.
\bibliographystyle{techmech}
\bibliography{DissRefConH}

\end{document}